\providecommand{\tabularnewline}{\\}
\theoremstyle{plain}
\newtheorem{thm}{\protect\theoremname}[section]
\theoremstyle{remark}
\newtheorem{rem}[thm]{\protect\remarkname}
\theoremstyle{plain}
\newtheorem{prop}[thm]{\protect\propositionname}
\theoremstyle{plain}
\newtheorem{lem}[thm]{\protect\lemmaname}
\providecommand{\lemmaname}{Lemma}
\providecommand{\propositionname}{Proposition}
\providecommand{\remarkname}{Remark}
\providecommand{\theoremname}{Theorem}
\begin{document}
\title{Failure of Fourier pricing techniques to approximate the Greeks}
\author{Tobias Behrens\thanks{Carl von Ossietzky Universität, Institut für Mathematik, 26129 Oldenburg,
Germany.}, Gero Junike\thanks{Corresponding author. Carl von Ossietzky Universität, Institut für
Mathematik, 26129 Oldenburg, Germany, ORCID: 0000-0001-8686-2661,
E-mail: gero.junike@uol.de}, Wim Schoutens\thanks{KU Leuven, Department of Mathematics, Celestijnenlaan 200B, 3001 Leuven,
Belgium, ORCID: 0000-0001-8510-1510, E-mail: wim.schoutens@kuleuven.be}}
\maketitle
\begin{abstract}
The Greeks Delta and Gamma of plain vanilla options play a fundamental
role in finance, e.g., in hedging or risk management. These Greeks
are approximated in many models such as the widely used Variance Gamma
model by Fourier techniques such as the Carr-Madan formula, the COS
method or the Lewis formula. However, for some realistic market parameters,
we show empirically that these three Fourier methods completely fail
to approximate the Greeks. As an application we show that the Delta-Gamma
VaR is severely underestimated in realistic market environments. As
a solution, we propose to use finite differences instead to obtain
the Greeks.\\
 \textbf{Keywords: }Fourier pricing, Carr-Madan formula, COS method,
Lewis formula, Greeks, Gamma\\
 \textbf{Mathematics Subject Classification} 60E10 · 91G20 · 91G60\\
 \textbf{JEL Classification }C63 · G13 
\end{abstract}

\section{Introduction }

Financial contracts such as European options are widely used by financial
institutions. The derivatives or sensitivities of options are important
parameters for hedging and risk management purposes and are known
as Greeks. The most important Greeks are Delta and Gamma, which are
the first and second derivatives of a European option with respect
to the current price of the underlying asset.

For some simple pricing models such as the Black-Scholes model, there
are explicit formulas for the price, the Delta and the Gamma of different
European options. However, for many advanced stock price models such
as the Variance Gamma (VG) model\footnote{The VG model is implemented in Quantlib, a widely used software package
in the financial industry, see https://www.quantlib.org/}, see \citet{madan1998variance}, there are no explicit formulas for
the price and the Greeks, even for plain vanilla options. However,
since the characteristic function of the log returns in the VG model
is explicitly given, fast Fourier techniques can be used to obtain
prices and Greeks numerically efficiently.

\citet{heston1993closed} was one of the first to introduce Fourier
pricing techniques. Since then, many other Fourier pricing techniques
have been considered, in particular the Carr-Madan formula, the Lewis
formula, the COS method, the CONV method and pricing with wavelets,
see \citet{carr1999option}, \citet{lewis2001simple}, \citet{eberlein2010analysis},
\citet{fang2009novel}, \citet{junike2022precise}, \citet{lord2008fast},\\
\citet{ortiz2013robust} and \citet{ortiz2016highly}. Among many 
others, the following researchers apply Fourier techniques to obtain
the Greeks:\\ \citet{takahashi2009efficient}, \citet{eberlein2016computation}
and \citet{leitao2018data}. Formally, one often has to interchange
integration and differentiation to obtain formulas for the Greeks.
However, this is not always allowed!

In this paper, we make the following contributions: The characteristic
function of the log returns in the VG model has Pareto heavy tails
and the density of the log returns is unbounded, in particular, it
is not differentiable, see \citet{kuchler2008shapes}. We introduce
the \emph{mixture exponential }(ME) model, which is a generalization
of the Laplace model, see \citet{madan2016adapted}. Similar to the
VG model, the characteristic function in the ME model has Pareto heavy
tails and the density contains a jump. However, the ME model is easier
to understand and to work with because closed form expressions for
the characteristic function and the density of the log returns; the
price, the Delta and the Gamma of simple options are explicitly given.

We summarize three Fourier pricing techniques: The Carr-Madan formula,
the COS method and the Lewis formula, and we recall sufficient conditions
from the literature guaranteeing that these three Fourier techniques
can approximate the Greeks Delta and Gamma. We show that the VG and
the ME models do not satisfy these conditions for some (realistic)
model parameters.

Since the Delta in the ME model has a (single) jump, it is understandable
that Fourier pricing techniques locally fail to approximate the Gamma.
However, we show empirically that they also \emph{globally} fail to
approximate the Gamma, even far away from the jump. This is surprising.
For the Gamma, the three Fourier techniques produce only meaningless
results in the VG and the ME models for call options and digital put
options. As a simple solution, we propose to use finite differences
instead of Fourier techniques to obtain the Greeks Delta and Gamma,
compare with \citep[Sec. 7]{glasserman2004monte}. Finite differences
are simple to implement. They work very well and are consistent with
the closed form solution in the ME model. 

As an application, we estimate the value at risk (VaR) of a digital
put option over a time horizon of one-day in the VG and the ME models
using realistic market parameters and three different methods: i)
a full Monte Carlo simulation, ii) Delta-Gamma VaR where Delta and
Gamma are obtained by finite differences, and iii) Delta-Gamma VaR
where Delta and Gamma are obtained by the three Fourier pricing techniques.
We observe that the Delta-Gamma VaR using finite differences is very
close to the ground truth, i.e., the VaR based on the full Monte Carlo
simulation. However, the Delta-Gamma VaR using Fourier pricing techniques
underestimates the VaR by a large factor, in some realistic experiments
by a factor of up to 2000.

In Section \ref{sec:Stock-price-models} we introduce the VG and the
ME models and the Carr-Madan formula, the COS method and the Lewis
formula. In Section \ref{sec:Numerical-experiments}, we perform some
numerical experiments including an application in risk management.
Section \ref{sec:Conclusion} concludes. 

\section{\protect\label{sec:Stock-price-models}Stock price models and Fourier
Pricing techniques}

Let $(\Omega,\mathcal{F},P)$ be a probability space. By $P$ we denote
the physical measure and by $Q$ we denote a risk-neutral measure.
We consider an investor with time horizon $T>0$. Inspired by \citet{corcuera2009implied},
we assume that there is a financial market consisting of a stock with
price $S_{0}>0$ today and (random) price 
\[
S_{T}:=S_{0}\bar{S}_{T}:=S_{0}\exp(rT+m+X_{T}),
\]
at the end of the time horizon, where $r$ describes some interest
rates, $m$ is a mean-correcting term putting us in a risk-neutral
setting and $X_{T}$ is a random variable. We consider two models:
the Variance Gamma (VG) model and the mixture exponential (ME) model.

\emph{VG model:} $X_{T}$ follows a Variance Gamma distribution with
parameters $\nu>0$, $\sigma>0$ and $\theta\in\mathbb{R}$, see \citet{madan1998variance}.
The mean-correcting term and the characteristic function of $\log(\bar{S}_{T})$
are given by 
\[
m=\frac{T}{\nu}\log\big(1-\theta\nu-\frac{\sigma^{2}\nu}{2}\big)\quad\text{and}\quad\varphi_{VG}(u)=\frac{\exp\left(iu(rT+m)\right)}{\big(1-iu\theta\nu+\frac{\sigma^{2}u^{2}\nu}{2}\big)^{\frac{T}{\nu}}}.
\]
The density of $X_{T}$ is unbounded if $T<\frac{\nu}{2}$, see \citet{kuchler2008shapes}.

\emph{ME model:} The ME model is a very simple model. Similar to the
VG model, the density of $X_{T}$ is usually not differentiable: Let
$\lambda,\eta>0$ and $\mu\in\mathbb{R}$. Consider the density 
\begin{equation}
f(x)=\begin{cases}
\frac{\eta}{2}\exp(\eta(x-\mu)) & ,x<\mu\\
\frac{\lambda}{2}\exp(-\lambda(x-\mu)) & ,x\geq\mu\text{.}
\end{cases}\label{eq:f}
\end{equation}
The Laplace distribution, see \citet{madan2016adapted}, is included
as a special case for $\lambda=\eta$. The density $f$ is not continuous
if $\lambda\neq\eta$. We call a random variable with density $f$,
ME($\mu$, $\eta$, $\lambda$) distributed. In the ME model with
parameters $\lambda>0$ and $\eta>0$, we assume that $X_{T}$ is
ME(0, $\frac{\eta}{\sqrt{T}}$, $\frac{\lambda}{\sqrt{T}}$) distributed.
Unlike the Laplace model, the ME model can model gains and losses
of different shapes. Closed form expressions for the characteristic
function $\varphi_{ME}$ of $\log(\bar{S}_{T})$, and the price, Delta
and Gamma of a digital put option can be obtained straightforwardly
and are provided in Appendix \ref{sec:ME}.

We assume that there is a European option with maturity $T>0$. In
particular, we consider plain vanilla call options with payoff $(S_{T}-K)^{+}$
and digital put options with payoff $1_{\{K<S_{T}\}}$, where $K>0$
is the strike.

We denote by $\varphi$ the characteristic function of $\log(\bar{S}_{T})$
and by $f$ the density of $\log(\bar{S}_{T})$. For many models in
mathematical finance, $\varphi$ is given analytically, but $f$ is
not explicitly known. The Delta (Gamma) of an option is the first
(second) partial derivative of the price of the option with respect
to $S_{0}$.

Next, we briefly summarize three widely used Fourier pricing techniques,
namely the Carr-Madan (CM) formula, the COS method and the Lewis formula.
Given an expression for the price, formulas for the Greeks Delta and
Gamma can be easily obtained for the three methods by interchanging
integration and differentiation, which is formally justified in Lemma
\ref{lem:Grubb-CM}.

\emph{Carr-Madan formula: }Let $\alpha>0$ be a damping factor such
that $E\big[S_{T}^{1+\alpha}\big]<\infty$. The Carr-Madan formula
for the price of a call option is given by 
\begin{equation}
\pi_{CM}^{\text{call}}=\frac{e^{-\alpha\log(K)}}{\pi}e^{-rT}\int_{0}^{\infty}\underbrace{\Re\bigg\{ e^{-iv\log(K)}\frac{\varphi(v-i(\alpha+1))e^{(\alpha+1+iv)\log(S_{0})}}{\alpha^{2}+\alpha-v^{2}+i(2\alpha+1)v}\bigg\}}_{=:h(v,S_{0})}dv\text{,}\label{eq:CM}
\end{equation}
see \citet{carr1999option}. Formulas for the Greeks Delta and Gamma
are given in \citet[Sec. 3.1.2]{madan2016applied}. To derive these
formulas, integration and differentiation must be interchanged. Sufficient
conditions are provided in Lemma \ref{lem:Grubb-CM}, in particular,
$v\mapsto\left|\frac{\partial h(v,S_{0})}{\partial S_{0}}\right|$
and $v\mapsto\left|\frac{\partial^{2}h(v,S_{0})}{\partial S_{0}^{2}}\right|$
have to be integrable. Similarly, the price of a digital put option
is given by 
\begin{equation}
\pi_{CM}^{\text{digital}}=-\frac{e^{\alpha\log(K)}}{\pi}e^{-rT}\int_{0}^{\infty}\Re\bigg\{ e^{-iv\log(K)}\frac{\varphi(v+i\alpha)e^{(-\alpha+iv)\log(S_{0})}}{iv-\alpha}\bigg\} dv\text{.}\label{eq:cm_digi}
\end{equation}

\emph{COS method: }The COS method goes back to \citet{fang2009novel}.
The main idea of the COS method is to truncate the (unknown) density
$f$ to a finite interval and to approximate the truncated density
by a Fourier cosine expansion. There is a clever trick to obtain the
cosine coefficients for $f$ in a very fast and robust way using $\varphi$.
Sufficient conditions for approximating the price of a call option
or a digital put option and the Greeks Delta and Gamma are given in
\citet[Thm. 5.2]{junike2024number}. In particular, the density $f$
must be continuously differentiable.

\emph{Lewis formula:} The Lewis formula expresses the price of an
option by 
\begin{equation}
\pi_{Lewis}=\frac{e^{-rT}}{\pi}\int_{0}^{\infty}\underbrace{\Re\big\{ S_{0}^{\alpha-iv}\varphi(-i\alpha-v)\hat{w}(v+i\alpha)\big\}}_{=:g(v,S_{0})}dv,\label{eq:Lewis}
\end{equation}
where $\alpha\in\mathbb{R}$ is a suitable damping factor and $\hat{w}$
is the Fourier transform of the payoff function, e.g., 
\begin{equation}
\hat{w}_{\text{call}}(z)=\frac{K^{1+iz}}{iz(1+iz)},\quad\hat{w}_{\text{digital}}(z)=\frac{K^{iz}}{iz},\label{eq:w_hat}
\end{equation}
where the range for the imaginary part of $z$ is $(1,\infty)$ and
$(-\infty,0)$, respectively. Sufficient conditions for obtaining
prices and Greeks by the Lewis formula are based on Lemma \ref{lem:Grubb-CM}
and are also given in \citet[Sec. 4]{eberlein2010analysis}. In particular,
$v\mapsto\left|\frac{\partial g(v,S_{0})}{\partial S_{0}}\right|$
and $v\mapsto\left|\frac{\partial^{2}g(v,S_{0})}{\partial S_{0}^{2}}\right|$
have to be integrable. How to choose $\alpha$ can be found in \citet{bayer2022optimal}.

\section{\protect\label{sec:Numerical-experiments}Numerical experiments}

We use the following parameters for the VG and the ME models in order
to perform numerical experiments: $K=0.75$, $T=\frac{1}{12}$, $r=0$.
For the VG model we use the parameters
\begin{equation}
\sigma=0.13,\quad\theta=0,\quad\nu=0.4.\label{eq:paramVG}
\end{equation}
For the ME model, we set
\begin{equation}
\eta=1,\quad\lambda=2.\label{eq:paramME}
\end{equation}
These are realistic market parameters, especially considering the
VG model, compare with \citet[Tab. II]{madan1998variance}. For these
parameters, the conditions mentioned in Section \ref{sec:Stock-price-models}
of the three Fourier techniques to approximate the Delta and Gamma
\emph{are not met}, see Remark \ref{rem:notMet}\emph{.} As shown
in Figure \ref{fig:ME_model}, the Carr-Madan formula, the COS method
and the Lewis formula price a call option or digital put option correctly
but do not approximate the Gamma in any sense. A simple solution would
be to use finite differences to obtain the Greeks instead of Fourier
pricing techniques. Remark \ref{rem:Finite-differences-step} discusses
how to set the step size for applying finite differences. Implementation
details are given in Remark \ref{rem:Implementation-details:-For}. 
\begin{rem}
\label{rem:notMet}Note that the $|\hat{w}_{call}|$ and $|\hat{w}_{digi}|$
in (\ref{eq:w_hat}) are bounded. Suppose $T<\frac{\nu}{2}$ in the
case of the VG model and $\lambda\neq\eta$ in the case of the ME
model. We have that 
\[
|\varphi_{VG}(u)|=O\left(u^{-\frac{2T}{\nu}}\right),\quad u\to\infty\quad\text{and}\quad|\varphi_{ME}(u)|=O\left(u^{-1}\right),\quad u\to\infty.
\]
Thus, it is easy to see that the sufficient conditions of the Carr-Madan
formula and the Lewis formula for obtaining the Gamma of a call option
or digital put option in the VG or the ME models are not satisfied.
Furthermore, the density of $\log(S_{T})$ in the VG model is unbounded
if $T<\frac{\nu}{2}$, see \citet{kuchler2008shapes}. Thus, the density
in the VG model is not differentiable and the sufficient conditions
for the COS method to approximate the Gamma are also not satisfied.
The density is also not differentiable in the ME model if $\eta\neq\lambda$. 
\end{rem}

\begin{figure}[H]
\begin{centering}
\includegraphics[scale=0.8]{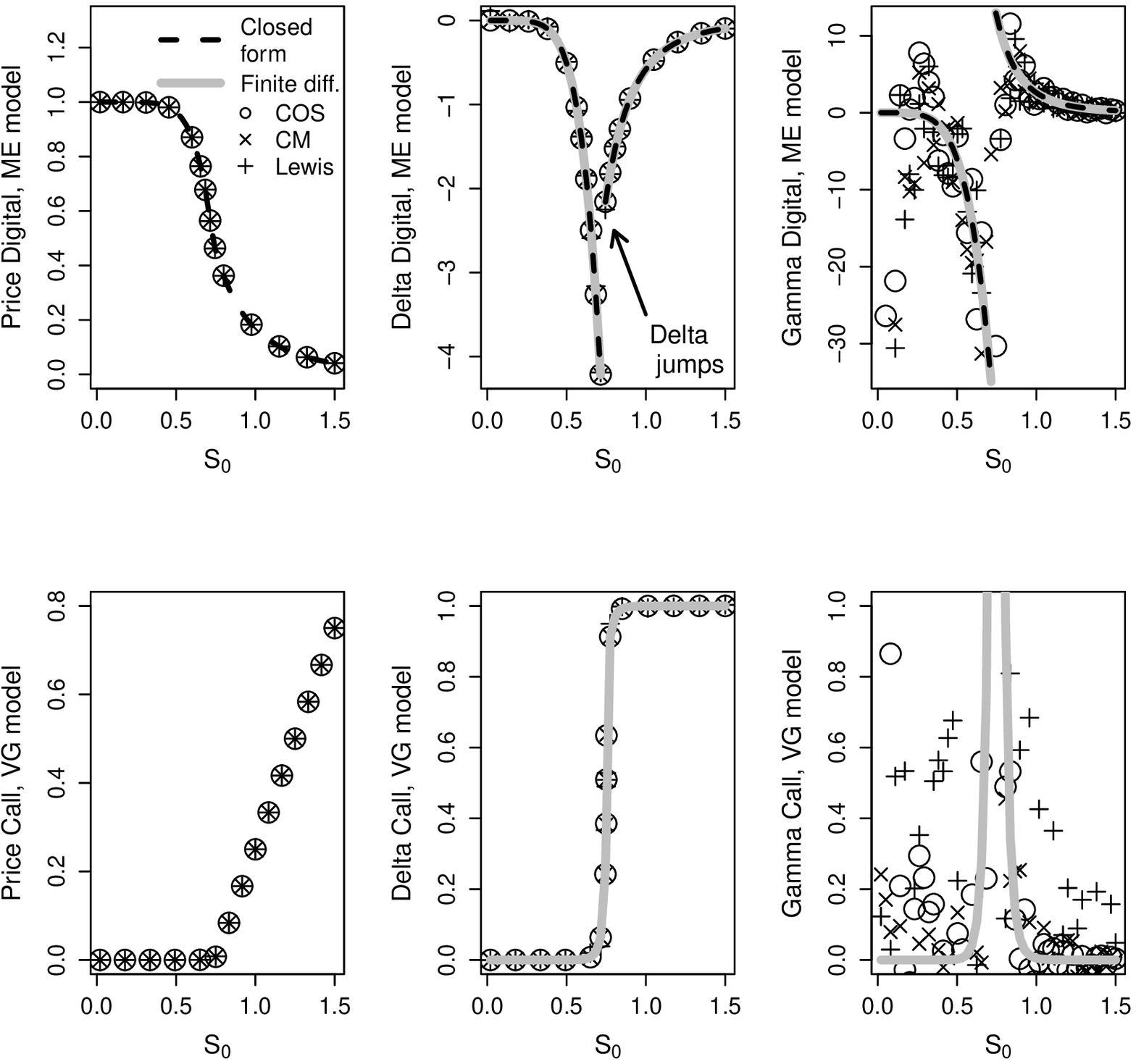}
\par\end{centering}
\centering{}\caption{\protect\label{fig:ME_model}Prices and Greeks of a call option and
a digital put option using the Carr-Madan (CM) formula, the COS method,
the Lewis formula and finite differences under the ME and the VG models.}
\end{figure}

\begin{rem}
\label{rem:Finite-differences-step}There is a rich literature on
how to choose the step size for applying central finite differences,
see, e.g., \citet[Sec. 6.4]{martins2021engineering}; \citet[Sec. 2]{OwolabiKoladeM2019Nmff};
\citet[Sec. 5.1]{sauer2013}. In particular, in \citet[Sec. 5.1.2]{sauer2013},
the optimal step size is given by the cube root of the machine precision,
assuming that the error between the true price and the floating point
version of the price is dominated by the machine precision, which
is approximately equal to $\varepsilon_{\text{mach}}=10^{-15}$. However,
in our case, the error is dominated by the error of the Fourier estimator.
Following the same arguments as in \citet[Sec. 5.1.2]{sauer2013},
the optimal step size for applying finite differences should be of
the same order of magnitude as the cube root of the error introduced
by the Fourier estimator. For example, if the Fourier-pricing error
is of order $10^{-6}$, we recommend choosing the step-size $h=0.01$.
For example, for the COS method, the error introduced by the Fourier
estimator is explicitly know a priori, see \citet{junike2022precise},
\citet{junike2024number}.
\end{rem}

\begin{rem}
\label{rem:Implementation-details:-For}Implementation details: For
the Carr-Madan formula, the integrals in (\ref{eq:CM}) and (\ref{eq:cm_digi})
are solved numerically by Simpson's rule and we set the number of
grid points to $N=2^{17}$, the damping factor to $\alpha=0.1$ and
the Fourier truncation range to $4800$. For the COS method, we use
a truncation range of size $60$ and set the number of terms to $10^{5}$.
For the Lewis formula, we use a damping factor $\alpha=1.1$ for the
call option and $\alpha=-0.1$ for the digital put option. We use
a number of terms $N=2000$ to apply Gauss-Laguerre quadrature to
approximate the integral in (\ref{eq:Lewis}). We implement central
finite differences with step size $h=0.01$. The R-code of the three
methods can be found in \url{https://github.com/GeroJunike/Fourier_pricing} 
\end{rem}

\subsection{Application to Delta-Gamma VaR}

We consider a portfolio $C$ consisting of a single digital put option
with strike $K=0.75$ and maturity $T=\frac{1}{12}$ and a loan,
i.e,
\[
C:=\pi(S_{t})-\pi(S_{0}),
\]
where $S_{0}$ denotes the current stock price and $S_{t}$ denotes
the future stock price over a time horizon of one day. $\pi(s)$ is
the price of the option when the current asset price is $s$ and we
assume that $\pi$ is twice differentiable in $s$. As before, we
set $r=0$.

The one-day VaR at a fixed level $p\in(0,1)$ for the financial position
$C$ is defined as
\[
\text{VaR}_{p}(C)\coloneqq\inf\{x|P(C+x<0)\leq p\},
\]
Usually we use $p=0.01.$ ``This means that the $\text{VaR}_{p}(C)$
is the smallest amount of capital which, if added to $C$ and invested
in the risk-free asset, keeps the probability of a negative outcome
below the level $p$'', see \citet[p. 231]{follmer2011stochastic}.

In principle, the VaR can be estimated by Monte Carlo simulation:
simulate the stock price at the one-day horizon $N$ times and calculate
the option price for each scenario\footnote{As noted in \citet{duffie2001analytical}, banks often ignore any
drift under the physical measure for the purposes of short-horizon
VaR. We follow this approach and simulate $S_{t}$ under $Q$, i.e.,
assuming a risk-neutral drift term. This approach is valid if the
model parameters under the physical measure $P$ and the risk-neutral
measure $Q$ differ only in the drift term, compare with \citet[Sec. 6.2.2]{schoutens2003levy}. }. The full Monte Carlo VaR is then obtained from the empirical $0.01$-quantile
of the $N$ option prices, see \citet[Sec. 22.2]{HullJohn2021Ofao}.
This approach is computationally expensive when $N$ is large, since
a Fourier-pricing technique must be applied $N$ times to price the
option under all scenarios.

The Delta-Gamma VaR uses a second order approximation of the price
$\pi$ and is much faster to compute. It is defined, for example,
in \citet{duffie2001analytical} and it is applicable when the one-day
changes in the stock price are small. Using Taylor's theorem, we write
the value of the portfolio $C$ as a function of $S_{t}$ by
\begin{equation}
C=\pi(S_{t})-\pi(S_{0})\approx\Delta(S_{0})\times(S_{t}-S_{0})+\frac{1}{2}\Gamma(S_{0})\times(S_{t}-S_{0})^{2},\label{eq:taylor}
\end{equation}
where $\Delta(S_{0})=\frac{\partial\pi(S_{0})}{\partial S_{0}}$and
$\Gamma(S_{0})=\frac{\partial^{2}\pi(S_{0})}{\partial S_{0}^{2}}$
. We ignore any time effects, since the maturity is far enough in
the future. One can also consider the Delta-Gamma-Theta VaR, see \citet{castellacci2003practice},
where Theta measures the sensitivity of the option price to the passage
of time.

To obtain the Delta-Gamma VaR, compute the Greeks Delta and Gamma
once and simulate $N$ scenarios for $S_{t}$. The Delta-Gamma VaR
is defined by the empirical $0.01$-quantile of the $N$ values for
the portfolio obtained from the right-hand side of Equation (\ref{eq:taylor}).
It takes approximately 3.7 hours to compute the full Monte Carlo VaR
with $N=10^{6}$ using the Carr-Madan formula for pricing. We implement
the full Monte Carlo VaR in the software R with vectorized code on
an AMD Ryzen 5 5500U 2.1 GHz processor. In comparison, the Delta-Gamma
VaR can be obtained in as little as 0.2 seconds by estimating Delta
and Gamma (just once) by finite differences.

To simulate the stock price in the ME model we set $S_{0}=0.75$ and
use the same parameters as in (\ref{eq:paramME}). We use the inverse
transformation, since the quantile function is given in closed form,
see Appendix \ref{sec:ME}. 

To simulate the stock price in the VG model we set $S_{0}=0.65$ and
use the same parameters as in (\ref{eq:paramVG}). A VG random variable
can be simulated by the difference of two independent random variables
following a Gamma distribution, see \citet[Sec. 8.4.2]{schoutens2003levy}.
As can be seen from Table \ref{tab:ME-model} and Table \ref{tab:VG-model},
the Delta-Gamma VaR based on finite differences is close to the full
Monte Carlo VaR. However, the Delta-Gamma VaR based on any of the
three Fourier pricing techniques underestimates the full Monte Carlo
VaR by approximately a factor of 3 for the ME model and by approximately
a factor of 20 to 2000 for the VG model.

\begin{table}[H]
\begin{centering}
\begin{tabular}{|>{\centering}p{2cm}|>{\centering}p{1.3cm}|>{\centering}p{1.3cm}|>{\centering}p{1.3cm}|>{\centering}p{1.3cm}|>{\centering}p{1.3cm}|>{\centering}p{1.3cm}|>{\centering}p{1.3cm}|}
\hline 
 & Analy-tical & finite diff., $h=0.01$ & finite diff., $h=0.02$ & finite diff., $h=0.005$ & COS & Carr-Madan & Lewis\tabularnewline
\hline 
\hline 
$\pi(S_{0})$ & 0.45 & / & / & / & 0.45 & 0.46 & 0.45\tabularnewline
$\Delta(S_{0})$ & -2.10 & -2.10 & -2.10 & -2.10 & -2.09 & -2.09 & -2.01\tabularnewline
$\Gamma(S_{0})$ & 12.47 & 12.47 & 12.55 & 12.46 & 36.98 & 29.01 & 38.54\tabularnewline
\centering{}$\Delta$-$\Gamma$-VaR & \centering{}0.15 & \centering{}0.15 & \centering{}0.15 & \centering{}0.15 & \centering{}0.06 & \centering{}0.08 & \centering{}0.05\tabularnewline
\hline 
\end{tabular}
\par\end{centering}
\caption{\protect\label{tab:ME-model}Delta-Gamma VaR in the ME model: The
portfolio consists of a digital put option and a loan in the risk-free
asset. A full Monte Carlo VaR is equal to $0.18$.}
\end{table}
\begin{table}[H]
\begin{centering}
\begin{tabular}{|>{\centering}p{2cm}|>{\centering}p{1.3cm}|>{\centering}p{1.3cm}|>{\centering}p{1.3cm}|>{\centering}p{1.3cm}|>{\centering}p{1.3cm}|>{\centering}p{1.3cm}|}
\hline 
 & finite diff., $h=0.01$ & finite diff., $h=0.02$ & finite diff., $h=0.005$ & COS & Carr-Madan & Lewis\tabularnewline
\hline 
\hline 
$\pi(S_{0})$ & / & / & / & 0.99 & 0.99 & 0.98\tabularnewline
$\Delta(S_{0})$ & -0.21 & -0.21 & -0.21 & 0.06 & 0.05 & 0.24\tabularnewline
$\Gamma(S_{0})$ & -6.45 & -7.01 & -6.62 & 1347.08 & 1647.57 & 217.87\tabularnewline
$\Delta$-$\Gamma$-VaR & $2.5\times10^{-3}$ & $2.6\times10^{-3}$ & $2.5\times10^{-3}$ & $1.0\times10^{-6}$ & $1.0\times10^{-6}$ & $1.0\times10^{-4}$\tabularnewline
\hline 
\end{tabular}
\par\end{centering}
\caption{\protect\label{tab:VG-model}Delta-Gamma VaR in the VG model: The
portfolio consists of a digital put option and a loan in the risk-free
asset. A full Monte Carlo VaR is equal to $2.1\times10^{-3}$.}
\end{table}

\section{\protect\label{sec:Conclusion}Conclusion}

In this article, we have discussed the approximation of the Greeks
Delta and Gamma of digital put options and plain vanilla call options
in the very simple ME model and the more complex and widely used VG
model. In both models, well-known Fourier pricing techniques such
as the Carr-Madan formula, the COS method or the Lewis method fail
to approximate the Gamma in any meaningful way for some realistic
market parameters because the sufficient conditions for using these
Fourier techniques to obtain Gamma are not satisfied. This can lead
to significant problems, such as a significant underestimation of
VaR when approximated by the Delta-Gamma approach, thereby misrepresenting
the true risk exposure. This certainly shows that one has to be very
careful: One should not blindly trust the output of a Fourier method
without first checking the conditions for its application. A simple
solution is to use Fourier methods only for pricing and to use finite
differences to obtain the Greeks.

\appendix

\section{\protect\label{sec:ME}Technical details}
\begin{prop}
Let $X\sim\text{ME}(\mu\text{, }\eta\text{, }\lambda)$. Then $X$
has the cumulative distribution function
\[
F(x)=\begin{cases}
\frac{1}{2}e^{\eta(x-\mu)} & ,x\leq\mu\\
1-\frac{1}{2}e^{-\lambda(x-\mu)} & ,x\geq\mu\text{.}
\end{cases}
\]
The quantile function is given by 
\[
F^{-1}(y)=\begin{cases}
\frac{1}{\eta}\log(2y)+\mu & ,0<y<0.5\\
-\frac{1}{\lambda}\log(2(1-y))+\mu & ,0.5\leq y<1.
\end{cases}
\]
The characteristic function of $X$ is given by 
\begin{align}
\varphi_{X}(u)= & \frac{e^{iu\mu}}{2}\left(\frac{\lambda}{\lambda-iu}+\frac{\eta}{iu+\eta}\right).\label{eq:phiX}
\end{align}
\end{prop}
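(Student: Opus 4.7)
The plan is to establish the three claims by direct piecewise computation from the density $f$ given in (\ref{eq:f}). For the cumulative distribution function, I would integrate $f$ separately on each side of $\mu$. For $x\leq\mu$, the computation $F(x)=\int_{-\infty}^{x}\frac{\eta}{2}e^{\eta(t-\mu)}\,dt=\frac{1}{2}e^{\eta(x-\mu)}$ is immediate. For $x\geq\mu$, split at $\mu$ so that $F(x)=F(\mu)+\int_{\mu}^{x}\frac{\lambda}{2}e^{-\lambda(t-\mu)}\,dt$; using $F(\mu)=\frac{1}{2}$ from the first branch and evaluating the remaining elementary integral gives $F(x)=1-\frac{1}{2}e^{-\lambda(x-\mu)}$. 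As a sanity check, $F$ is continuous at $\mu$ (both pieces equal $\frac{1}{2}$) and $\lim_{x\to\infty}F(x)=1$, confirming that $f$ really is a probability density on $\mathbb{R}$.

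For the quantile function, I would invert $F$ on each branch, exploiting the fact that $F$ is strictly increasing on $(-\infty,\mu)$ and on $(\mu,\infty)$ with the threshold value $F(\mu)=\frac{1}{2}$. For $0<y<\frac{1}{2}$, solving $\frac{1}{2}e^{\eta(x-\mu)}=y$ yields $x=\frac{1}{\eta}\log(2y)+\mu$; for $\frac{1}{2}\leq y<1$, solving $1-\frac{1}{2}e^{-\lambda(x-\mu)}=y$ yields $x=-\frac{1}{\lambda}\log(2(1-y))+\mu$. Strict monotonicity guarantees uniqueness, so these are genuine inverses.

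For the characteristic function, compute $\varphi_{X}(u)=\int_{-\infty}^{\mu}\frac{\eta}{2}e^{iux}e^{\eta(x-\mu)}\,dx+\int_{\mu}^{\infty}\frac{\lambda}{2}e^{iux}e^{-\lambda(x-\mu)}\,dx$. The substitution $y=x-\mu$ in each integral pulls out a common factor $e^{iu\mu}$ and reduces the problem to the two elementary integrals $\int_{-\infty}^{0}\frac{\eta}{2}e^{(iu+\eta)y}\,dy=\frac{\eta}{2(iu+\eta)}$ and $\int_{0}^{\infty}\frac{\lambda}{2}e^{(iu-\lambda)y}\,dy=\frac{\lambda}{2(\lambda-iu)}$. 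Both integrals converge absolutely for every real $u$ because $\eta,\lambda>0$, so no principal-value or contour arguments are needed; summing and factoring out $e^{iu\mu}/2$ yields the stated formula (\ref{eq:phiX}). There is no real obstacle here: everything reduces to integrals of exponentials, and the only point worth a brief remark is that although $f$ is discontinuous at $\mu$ when $\lambda\neq\eta$, the CDF is still continuous there, so the distribution has no atoms and the inversion in step two is legitimate.
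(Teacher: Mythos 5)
Your proposal is correct and follows essentially the same route as the paper's own proof: integrate the density piecewise to obtain $F$, invert $F$ branch by branch to obtain $F^{-1}$, and evaluate the two elementary exponential integrals (which the paper writes as antiderivatives evaluated at the limits) to obtain $\varphi_{X}$. Your version merely spells out the intermediate steps and adds the harmless sanity checks on continuity of $F$ at $\mu$.
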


\begin{proof}
Integrate the density $f$ in Equation (\ref{eq:f}) to obtain $F$.
Invert $F$ to obtain $F^{-1}$. The characteristic function of $X$
is given by 
\[
\varphi_{X}(u)=E\left[e^{iuX}\right]=\frac{\eta}{2}\left[\frac{1}{iu+\eta}e^{(iu+\eta)x-\eta\mu}\right]_{-\infty}^{\mu}+\frac{\lambda}{2}\left[\frac{1}{iu-\lambda}e^{(iu-\lambda)x+\lambda\mu}\right]_{\mu}^{\infty},
\]
which leads to Equation (\ref{eq:phiX}).
\end{proof}
\begin{prop}
In the ME model, we have $m=-\log\big(\frac{1}{2}\big(\frac{\lambda}{\lambda-\sqrt{T}}+\frac{\eta}{\eta+\sqrt{T}}\big)\big)$.
The characteristic function $\varphi$ of $\log(\bar{S}_{T})$ in
the ME model is given by 
\[
\varphi(u)=\exp\left(iu\big(rT+m\big)\right)\frac{1}{2}\bigg(\frac{\lambda}{\lambda-\sqrt{T}iu}+\frac{\eta}{\eta+\sqrt{T}iu}\bigg).
\]
\end{prop}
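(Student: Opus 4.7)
The plan is to reduce everything to the characteristic function formula of the preceding proposition and then fix the mean-correcting term from the risk-neutral martingale condition. There is essentially no new content; it is bookkeeping.

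First, I would apply the previous proposition with $\mu=0$, and parameters $\eta/\sqrt{T}$ and $\lambda/\sqrt{T}$ in place of $\eta$ and $\lambda$, since by assumption $X_T \sim \text{ME}(0,\eta/\sqrt{T},\lambda/\sqrt{T})$. This gives
\[
\varphi_{X_T}(u) = \frac{1}{2}\left(\frac{\lambda/\sqrt{T}}{\lambda/\sqrt{T}-iu}+\frac{\eta/\sqrt{T}}{iu+\eta/\sqrt{T}}\right) = \frac{1}{2}\left(\frac{\lambda}{\lambda-\sqrt{T}iu}+\frac{\eta}{\eta+\sqrt{T}iu}\right),
\]
after multiplying numerator and denominator of each fraction by $\sqrt{T}$.

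Next, since $\log(\bar{S}_T) = rT + m + X_T$ with $rT+m$ deterministic, the characteristic function of $\log(\bar{S}_T)$ factors as
\[
\varphi(u) = e^{iu(rT+m)}\varphi_{X_T}(u),
\]
which already yields the claimed formula once $m$ is identified.

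To pin down $m$, I would invoke the risk-neutral (martingale) condition $E[\bar{S}_T]=e^{rT}$ that characterizes $m$ as a mean-correcting term. Since $\bar{S}_T = \exp(rT+m+X_T)$, this is equivalent to $E[e^{X_T}]=e^{-m}$, i.e. $m = -\log \varphi_{X_T}(-i)$. Plugging $u=-i$ into the expression for $\varphi_{X_T}$ derived above (so $iu = 1$), I obtain
\[
\varphi_{X_T}(-i) = \frac{1}{2}\left(\frac{\lambda}{\lambda-\sqrt{T}}+\frac{\eta}{\eta+\sqrt{T}}\right),
\]
and therefore $m = -\log\!\big(\tfrac{1}{2}\big(\tfrac{\lambda}{\lambda-\sqrt{T}}+\tfrac{\eta}{\eta+\sqrt{T}}\big)\big)$, matching the statement.

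The only subtle point, and the one I would double-check carefully, is the sign convention in evaluating $\varphi_{X_T}(-i)$: one must verify that $\sqrt{T}<\min(\lambda,\eta)$ so that the moment $E[e^{X_T}]$ is finite and the two poles of $\varphi_{X_T}$ in the complex plane are not crossed when extending from $u\in\mathbb{R}$ to $u=-i$. Aside from this (a standard analytic continuation argument for rational characteristic functions of ME-type laws), the proof is a direct substitution.
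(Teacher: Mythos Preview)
Your proof is correct and follows essentially the same route as the paper: apply the previous proposition with parameters $(0,\eta/\sqrt{T},\lambda/\sqrt{T})$ to obtain $\varphi_{X_T}$, use the shift relation $\varphi(u)=e^{iu(rT+m)}\varphi_{X_T}(u)$, and fix $m=-\log\varphi_{X_T}(-i)$ from the risk-neutral condition. Your additional remark on the integrability constraint $\sqrt{T}<\min(\lambda,\eta)$ is a welcome observation that the paper leaves implicit.
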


\begin{proof}
In the ME model, $X_{T}$ is ME(0, $\frac{\eta}{\sqrt{T}}$, $\frac{\lambda}{\sqrt{T}}$)
distributed. The characteristic function of $X_{T}$ is denoted by
$\varphi_{X_{T}}$. By \citet[Sec. 6.2.2]{schoutens2003levy}, the
mean-correcting term is given by $m=-\log\left(\varphi_{X_{T}}(-i)\right)$.
The characteristic function of $\log(\bar{S}_{T})$ is then given
by $u\mapsto\exp\left(iu\big(rT+m\big)\right)\varphi_{X_{T}}(u)$.
Use Equation (\ref{eq:phiX}) to conclude.
\end{proof}
Let $S_{0}^{\ast}:=K\exp(-m-rT)$ and $d:=\log\left(\frac{K}{S_{0}}\right)-rT-m$.
If $S_{0}=S_{0}^{\ast}$ then $d=0$.
\begin{prop}
The price of a digital put option in the ME model is given by 
\[
\pi_{\text{digital}}(S_{0})=\begin{cases}
\frac{1}{2}\exp\big(-rT+\frac{\eta}{\sqrt{T}}d\big) & ,S_{0}\geq S_{0}^{\ast}\\
\exp(-rT)\big(1-\frac{1}{2}\exp(-\frac{\lambda}{\sqrt{T}}d)\big) & ,\text{otherwise.}
\end{cases}
\]
\end{prop}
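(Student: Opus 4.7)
The plan is to write the digital put price as a discounted risk-neutral probability, specialise the closed-form CDF of $X_T$ obtained in the first Proposition of this appendix, and then split into two cases according to the sign of $d$.

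First, risk-neutral pricing gives
\[
\pi_{\text{digital}}(S_0) \;=\; e^{-rT}\, Q(S_T<K) \;=\; e^{-rT}\, F_{X_T}(d),
\]
where $F_{X_T}$ denotes the CDF of $X_T$; the second equality follows from the substitution $S_T = S_0 \exp(rT+m+X_T)$, which turns the event $\{S_T<K\}$ into $\{X_T<d\}$ with $d=\log(K/S_0)-rT-m$.

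Second, since $X_T \sim \mathrm{ME}(0,\eta/\sqrt T,\lambda/\sqrt T)$, I would specialise the CDF formula from the first Proposition of this appendix with $\mu=0$, replacing $\eta$ by $\eta/\sqrt T$ and $\lambda$ by $\lambda/\sqrt T$, to obtain
\[
F_{X_T}(x)=\begin{cases}\tfrac12\, e^{(\eta/\sqrt T)\,x}, & x\le 0,\\[2pt] 1-\tfrac12\, e^{-(\lambda/\sqrt T)\,x}, & x\ge 0.\end{cases}
\]

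Third, I would observe that $d\le 0$ is equivalent to $S_0 \ge S_0^{\ast}$: indeed, $d = \log(K e^{-m-rT}/S_0) = \log(S_0^{\ast}/S_0)$, which is nonpositive exactly when $S_0 \ge S_0^{\ast}$. Hence the case $S_0 \ge S_0^{\ast}$ selects the first branch of $F_{X_T}$ and, after multiplication by $e^{-rT}$, yields $\pi_{\text{digital}}(S_0)=\tfrac12 e^{-rT+(\eta/\sqrt T)d}$, while the case $S_0 < S_0^{\ast}$ selects the second branch and yields $\pi_{\text{digital}}(S_0)=e^{-rT}\bigl(1-\tfrac12 e^{-(\lambda/\sqrt T)d}\bigr)$.

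There is no genuine obstacle in this proof; the only small bookkeeping point is keeping the sign of $d$ aligned with the correct branch of the piecewise CDF, and verifying that the breakpoint $d=0$ corresponds precisely to $S_0 = S_0^{\ast}$, so that the two branches glue continuously at that boundary.
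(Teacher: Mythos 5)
Your proposal is correct and follows essentially the same route as the paper's own proof: write the price as $e^{-rT}F_{X_T}(d)$ by risk-neutral pricing and then substitute the explicit ME cumulative distribution function with parameters $(0,\eta/\sqrt{T},\lambda/\sqrt{T})$. You merely spell out the final case analysis ($d\le 0$ if and only if $S_0\ge S_0^{\ast}$) that the paper leaves implicit.
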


\begin{proof}
The arbitrage-free price of a digital put option is given by 
\begin{align*}
\pi_{\text{digital}}(S_{0})= & e^{-rT}E\left[1_{\{S_{T}<K\}}\right]\\
= & e^{-rT}Q\left[X_{T}\leq\log\left(\frac{K}{S_{0}}\right)-rT-m\right]\\
= & e^{-rT}F_{X_{T}}\left(d\right)\text{.}
\end{align*}
In the ME model, $X_{T}$ is ME(0, $\frac{\eta}{\sqrt{T}}$, $\frac{\lambda}{\sqrt{T}}$)
distributed, which concludes the proof.
\end{proof}
\begin{prop}
The Delta of a digital put option in the ME model is given by 
\[
\Delta_{\text{digital}}(S_{0})=\frac{\partial\pi_{\text{digital}}(S_{0})}{\partial S_{0}}=\begin{cases}
-\frac{\eta}{2\sqrt{T}S_{0}}\exp\big(-rT+\frac{\eta}{\sqrt{T}}d\big) & ,S_{0}>S_{0}^{\ast}\\
-\frac{\lambda}{2\sqrt{T}S_{0}}\exp\big(-rT-\frac{\lambda}{\sqrt{T}}d\big) & ,S_{0}<S_{0}^{\ast}.
\end{cases}
\]
Delta jumps at $S_{0}^{\ast}$ if $\lambda\neq\eta$.
\end{prop}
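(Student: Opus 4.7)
The plan is to differentiate the closed-form price from the previous proposition piecewise in $S_0$ using the chain rule, and then to inspect the one-sided limits at $S_0=S_0^{\ast}$ to establish the jump claim. The key observation is that the only $S_0$-dependence in the price sits inside $d=\log(K/S_0)-rT-m$, and $m$, $r$, $T$, $K$, $\eta$, $\lambda$ are all constants in $S_0$.

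First I would record $\partial d/\partial S_0 = -1/S_0$, which is valid on both open half-lines $(0,S_0^{\ast})$ and $(S_0^{\ast},\infty)$ (we do not try to differentiate at the single point $S_0=S_0^{\ast}$ itself). Next, for Case 1 ($S_0>S_0^{\ast}$), I differentiate $\frac{1}{2}\exp(-rT+\frac{\eta}{\sqrt{T}}d(S_0))$ by the chain rule, picking up the factor $\frac{\eta}{\sqrt{T}}\cdot(-\tfrac{1}{S_0})$, which immediately yields the first branch. For Case 2 ($S_0<S_0^{\ast}$), I differentiate $\exp(-rT)\bigl(1-\tfrac{1}{2}\exp(-\frac{\lambda}{\sqrt{T}}d(S_0))\bigr)$; the two minus signs from $-\tfrac{1}{2}$ and from $-\frac{\lambda}{\sqrt{T}}$ cancel, and multiplying by $\partial d/\partial S_0=-1/S_0$ restores the overall minus sign, giving the second branch.

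For the jump claim, I evaluate both branches at $S_0=S_0^{\ast}$, where $d=0$ by the definition of $S_0^{\ast}$, so the exponentials reduce to $e^{-rT}$. The right limit is $-\frac{\eta}{2\sqrt{T}S_0^{\ast}}e^{-rT}$ and the left limit is $-\frac{\lambda}{2\sqrt{T}S_0^{\ast}}e^{-rT}$. These coincide if and only if $\eta=\lambda$, so Delta has a jump of size $\frac{\lambda-\eta}{2\sqrt{T}S_0^{\ast}}e^{-rT}$ at $S_0^{\ast}$ whenever $\lambda\neq\eta$.

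There is no real obstacle: the argument is a one-line chain-rule computation on each open half-line, and the jump follows from plugging $d=0$ into the two branches. The only point requiring mild care is sign bookkeeping in Case 2 (three sign flips that compose to a single overall minus), and the implicit remark that the piecewise price is $C^1$ on each open half-line so term-by-term differentiation is unproblematic there.
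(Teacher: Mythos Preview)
Your proposal is correct and follows essentially the same route as the paper: piecewise differentiation of the closed-form price via the chain rule using $\partial d/\partial S_0=-1/S_0$, followed by evaluating the two branches at $S_0=S_0^{\ast}$ (where $d=0$) to see that the one-sided limits differ precisely when $\lambda\neq\eta$. Your write-up is, if anything, slightly more explicit about the sign bookkeeping and the one-sided limits than the paper's own proof.
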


\begin{proof}
For $S_{0}>S_{0}^{*}$ it holds that 
\begin{align*}
\frac{\partial\pi_{\text{digital}}(S_{0})}{\partial S_{0}}= & \frac{\partial}{\partial S_{0}}\frac{1}{2}\exp\big(-rT+\frac{\eta}{\sqrt{T}}d\big)\\
= & -\frac{\eta}{2\sqrt{T}S_{0}}\exp\big(-rT+\frac{\eta}{\sqrt{T}}d\big)\text{.}
\end{align*}
For $S_{0}<S_{0}^{*}$ it holds that 
\begin{align*}
\frac{\partial\pi_{\text{digital}}(S_{0})}{\partial S_{0}}= & \frac{\partial}{\partial S_{0}}\exp(-rT)\big(1-\frac{1}{2}\exp(-\frac{\lambda}{\sqrt{T}}d)\big)\\
= & -\frac{\lambda}{2\sqrt{T}S_{0}}\exp\big(-rT-\frac{\lambda}{\sqrt{T}}d\big)\text{.}
\end{align*}
For $S_{0}=S_{0}^{*}$ Delta jumps because for $\lambda\neq\eta$
it holds that
\[
-\frac{e^{-rT}}{2\sqrt{T}S_{0}}\eta\neq-\frac{e^{-rT}}{2\sqrt{T}S_{0}}\lambda\text{.}
\]
\end{proof}
\begin{prop}
The Gamma of a digital put option is undefined at $S_{0}^{\ast}$
if $\lambda\neq\eta$. For $S_{0}\neq S_{0}^{\ast}$ the Gamma as
a pointwise derivative is given by 
\[
\Gamma_{\text{digital}}(S_{0})=\frac{\partial^{2}\pi_{\text{digital}}(S_{0})}{\partial S_{0}^{2}}=\begin{cases}
\frac{1}{2S_{0}^{2}}\left(\frac{\eta^{2}}{T}+\frac{\eta}{\sqrt{T}}\right)\exp\big(-rT+\frac{\eta}{\sqrt{T}}d\big) & ,S_{0}>S_{0}^{\ast}\\
\frac{1}{2S_{0}^{2}}\left(-\frac{\lambda^{2}}{T}+\frac{\lambda}{\sqrt{T}}\right)\exp\big(-rT-\frac{\lambda}{\sqrt{T}}d\big) & ,S_{0}<S_{0}^{\ast}.
\end{cases}
\]
\end{prop}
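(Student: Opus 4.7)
The plan is straightforward: differentiate the closed-form Delta from the previous proposition once more with respect to $S_{0}$, separately on each of the two open regions $\{S_{0}>S_{0}^{\ast}\}$ and $\{S_{0}<S_{0}^{\ast}\}$, and then appeal to the preceding proposition to handle the point $S_{0}=S_{0}^{\ast}$.

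The only chain-rule ingredient needed throughout is $\partial d/\partial S_{0}=-1/S_{0}$, which follows immediately from $d=\log(K/S_{0})-rT-m$ together with the fact that $m$ does not depend on $S_{0}$. On $\{S_{0}>S_{0}^{\ast}\}$, Delta is the product of $-\frac{\eta}{2\sqrt{T}}S_{0}^{-1}$ with the exponential factor $\exp(-rT+\frac{\eta}{\sqrt{T}}d)$. Applying the product rule produces two contributions with the \emph{same} sign: differentiating $S_{0}^{-1}$ contributes $\frac{\eta}{2\sqrt{T}S_{0}^{2}}$ times the exponential, and differentiating the exponential (using the chain rule and $\partial d/\partial S_{0}=-1/S_{0}$) contributes $\frac{\eta^{2}}{2TS_{0}^{2}}$ times the exponential. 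Factoring gives the stated expression $\frac{1}{2S_{0}^{2}}\big(\frac{\eta^{2}}{T}+\frac{\eta}{\sqrt{T}}\big)\exp(-rT+\frac{\eta}{\sqrt{T}}d)$. The computation on $\{S_{0}<S_{0}^{\ast}\}$ is formally identical, but the prefactor now carries $-\lambda$ and the exponent is $-\frac{\lambda}{\sqrt{T}}d$, so when the two product-rule contributions are combined they come with \emph{opposite} signs, yielding $\frac{1}{2S_{0}^{2}}\big(-\frac{\lambda^{2}}{T}+\frac{\lambda}{\sqrt{T}}\big)\exp(-rT-\frac{\lambda}{\sqrt{T}}d)$.

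For the claim that Gamma is undefined at $S_{0}^{\ast}$ whenever $\lambda\neq\eta$, I would simply invoke the previous proposition, which shows that Delta already has a jump discontinuity at $S_{0}=S_{0}^{\ast}$ in that case. Since $\pi_{\text{digital}}'$ is discontinuous there, the pointwise derivative $\pi_{\text{digital}}''$ cannot exist at $S_{0}^{\ast}$. I do not foresee any real obstacle; the only thing to watch carefully is the sign bookkeeping when the factor $\partial d/\partial S_{0}=-1/S_{0}$ is combined with the negative exponent $-\frac{\lambda}{\sqrt{T}}d$ in the lower branch, which is precisely what flips the sign of the $\lambda^{2}/T$ term relative to the upper branch.
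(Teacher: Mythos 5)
Your proposal is correct and follows essentially the same route as the paper's proof: differentiate the closed-form Delta on each of the two open regions via the product and chain rules with $\partial d/\partial S_{0}=-1/S_{0}$, and conclude that Gamma is undefined at $S_{0}^{\ast}$ because Delta jumps there. The sign bookkeeping you describe (both product-rule terms positive in the upper branch, opposite signs in the lower branch) matches the paper's computation exactly.
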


\begin{proof}
The Gamma of a digital put option is undefined at $S_{0}^{\ast}$
since the Delta jumps at $S_{0}^{\ast}$. For $S_{0}>S_{0}^{*}$ it
holds that 
\begin{align*}
\frac{\partial^{2}\pi_{\text{digital}}(S_{0})}{\partial S_{0}^{2}}= & \frac{\partial}{\partial S_{0}}\left(-\frac{\eta}{2\sqrt{T}S_{0}}\exp\big(-rT+\frac{\eta}{\sqrt{T}}d\big)\right)\\
= & -\frac{\eta\exp\big(-rT+\frac{\eta}{\sqrt{T}}d\big)\biggl(-\frac{\eta}{\sqrt{T}S_{0}}\biggr)}{2\sqrt{T}S_{0}}+\frac{\eta\exp\big(-rT+\frac{\eta}{\sqrt{T}}d\big)}{2\sqrt{T}S_{0}^{2}}\\
= & \frac{1}{2S_{0}^{2}}\left(\frac{\eta^{2}}{T}+\frac{\eta}{\sqrt{T}}\right)\exp\big(-rT+\frac{\eta}{\sqrt{T}}d\big)\text{.}
\end{align*}
For $S_{0}<S_{0}^{*}$ it holds that 
\begin{align*}
\frac{\partial^{2}\pi_{\text{digital}}(S_{0})}{\partial S_{0}^{2}}= & \frac{\partial}{\partial S_{0}}\left(-\frac{\lambda}{2\sqrt{T}S_{0}}\exp\big(-rT-\frac{\lambda}{\sqrt{T}}d\big)\right)\\
= & -\frac{\lambda}{2\sqrt{T}S_{0}}\exp\big(-rT-\frac{\lambda}{\sqrt{T}}d\big)\frac{\lambda}{\sqrt{T}S_{0}}+\frac{\lambda}{2\sqrt{T}S_{0}^{2}}\exp\big(-rT-\frac{\lambda}{\sqrt{T}}d\big)\\
= & \frac{1}{2S_{0}^{2}}\left(-\frac{\lambda^{2}}{T}+\frac{\lambda}{\sqrt{T}}\right)\exp\big(-rT-\frac{\lambda}{\sqrt{T}}d\big)\text{.}
\end{align*}
\end{proof}

\begin{lem}
\label{lem:Grubb-CM} Let $M\subset\mathbb{R}$ be measurable, e.g.,
$M=(0,\infty)$. Let $h(v,S_{0})$ be a family of functions of $v\in M$
depending on the parameter $S_{0}>0$, such that for any $S_{0}>0$,
the map $v\mapsto h(v,S_{0})$ is integrable. Consider the function
$\pi$ defined by 
\[
\pi(S_{0})=\int_{M}h(v,S_{0})dv,\quad S_{0}>0.
\]
Assume that $\frac{\partial}{\partial S_{0}}h(v,S_{0})$ exists for
all $(v,S_{0})\in M\times(0,\infty),$ and there is an integrable
function $g$ such that 
\[
\left|\frac{\partial}{\partial S_{0}}h(v,S_{0})\right|\leq g(v)
\]
for all $(v,S_{0})\in M\times(0,\infty)$. Then $S_{0}\mapsto\pi(S_{0})$
is a differentiable function and 
\[
\frac{d}{dS_{0}}\pi(S_{0})=\text{\ensuremath{\int_{M}}}\frac{\partial}{\partial S_{0}}h(v,S_{0})dv\text{.}
\]
\end{lem}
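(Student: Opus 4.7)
The plan is to apply the standard differentiation-under-the-integral-sign argument based on the dominated convergence theorem. Fix $S_{0}\in(0,\infty)$ and let $(h_{n})_{n\in\mathbb{N}}$ be an arbitrary sequence of nonzero reals with $h_{n}\to 0$ and $|h_{n}|<S_{0}/2$ (so that $S_{0}+h_{n}>0$ lies in the domain). The goal is to show that the difference quotients
\[
\frac{\pi(S_{0}+h_{n})-\pi(S_{0})}{h_{n}}=\int_{M}\frac{h(v,S_{0}+h_{n})-h(v,S_{0})}{h_{n}}\,dv
\]
converge to $\int_{M}\frac{\partial}{\partial S_{0}}h(v,S_{0})\,dv$, where the interchange of the difference quotient and the integral is valid because, for each $n$, the integrand is integrable (a linear combination of $v\mapsto h(v,S_{0}+h_{n})$ and $v\mapsto h(v,S_{0})$, both of which are integrable by hypothesis).

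For the pointwise convergence of the integrand, note that by definition $\frac{h(v,S_{0}+h_{n})-h(v,S_{0})}{h_{n}}\to\frac{\partial}{\partial S_{0}}h(v,S_{0})$ for every $v\in M$. To obtain an integrable dominating function, I would apply the mean value theorem to the map $s\mapsto h(v,s)$ on the closed interval between $S_{0}$ and $S_{0}+h_{n}$ (which lies entirely in $(0,\infty)$ by the choice of $h_{n}$): there exists $\xi_{n}(v)$ between $S_{0}$ and $S_{0}+h_{n}$ such that
\[
\left|\frac{h(v,S_{0}+h_{n})-h(v,S_{0})}{h_{n}}\right|=\left|\frac{\partial}{\partial S_{0}}h(v,\xi_{n}(v))\right|\leq g(v).
\]
Since $g$ is integrable, the dominated convergence theorem applies and yields
\[
\lim_{n\to\infty}\frac{\pi(S_{0}+h_{n})-\pi(S_{0})}{h_{n}}=\int_{M}\frac{\partial}{\partial S_{0}}h(v,S_{0})\,dv.
\]
Because $(h_{n})$ was arbitrary, the derivative of $\pi$ at $S_{0}$ exists and equals the claimed expression.

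The main (minor) obstacle is the measurability of $v\mapsto\frac{\partial}{\partial S_{0}}h(v,S_{0})$ so that the right-hand integral is well-defined; this is automatic because the partial derivative is a pointwise limit of the measurable difference quotients $v\mapsto\frac{h(v,S_{0}+1/n)-h(v,S_{0})}{1/n}$. A secondary point worth stating explicitly is the need to restrict $|h_{n}|<S_{0}/2$ so that the mean value theorem is applied on a segment inside the parameter domain $(0,\infty)$; once this is observed, the rest reduces to a direct citation of dominated convergence.
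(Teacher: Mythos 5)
Your proof is correct and complete: the paper itself gives no argument but simply cites \citet[Lemma 2.8]{grubb2008distributions}, and your sequential dominated-convergence argument with the mean value theorem supplying the dominating function $g$ is exactly the standard proof of that cited result. The points you flag explicitly (measurability of the pointwise limit, keeping $S_{0}+h_{n}$ inside $(0,\infty)$) are handled properly, so nothing is missing.
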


\begin{proof}
\citet[Lemma 2.8]{grubb2008distributions}.
\end{proof}
\bibliographystyle{plainnat}
\bibliography{paper_Literaturverzeichnis}

\begin{thebibliography}{29}
\providecommand{\natexlab}[1]{#1}
\providecommand{\url}[1]{\texttt{#1}}
\expandafter\ifx\csname urlstyle\endcsname\relax
  \providecommand{\doi}[1]{doi: #1}\else
  \providecommand{\doi}{doi: \begingroup \urlstyle{rm}\Url}\fi

\bibitem[Bayer et~al.(2023)Bayer, Hammouda, Papapantoleon, Samet, and
  Tempone]{bayer2022optimal}
C.~Bayer, C.~B. Hammouda, A.~Papapantoleon, M.~Samet, and R.~Tempone.
\newblock {Optimal Damping with Hierarchical Adaptive Quadrature for Efficient
  Fourier Pricing of Multi-Asset Options in L{\'e}vy Models}.
\newblock \emph{Journal of Computational Finance}, 27\penalty0 (3):\penalty0
  43--86, 2023.

\bibitem[Carr and Madan(1999)]{carr1999option}
Peter Carr and Dilip Madan.
\newblock Option valuation using the fast fourier transform.
\newblock \emph{Journal of computational finance}, 2\penalty0 (4):\penalty0
  61--73, 1999.

\bibitem[Castellacci and Siclari(2003)]{castellacci2003practice}
Giuseppe Castellacci and Michael~J Siclari.
\newblock The practice of delta--gamma var: Implementing the quadratic
  portfolio model.
\newblock \emph{European Journal of Operational Research}, 150\penalty0
  (3):\penalty0 529--545, 2003.

\bibitem[Corcuera et~al.(2009)Corcuera, Guillaume, Leoni, and
  Schoutens]{corcuera2009implied}
Jos{\'e}~Manuel Corcuera, Florence Guillaume, Peter Leoni, and Wim Schoutens.
\newblock Implied l{\'e}vy volatility.
\newblock \emph{Quantitative Finance}, 9\penalty0 (4):\penalty0 383--393, 2009.

\bibitem[Duffie and Pan(2001)]{duffie2001analytical}
Darrell Duffie and Jun Pan.
\newblock {Analytical value-at-risk with jumps and credit risk}.
\newblock \emph{Finance and Stochastics}, 5:\penalty0 155--180, 2001.

\bibitem[Eberlein et~al.(2010)Eberlein, Glau, and
  Papapantoleon]{eberlein2010analysis}
Ernst Eberlein, Kathrin Glau, and Antonis Papapantoleon.
\newblock {Analysis of Fourier transform valuation formulas and applications}.
\newblock \emph{Applied Mathematical Finance}, 17\penalty0 (3):\penalty0
  211--240, 2010.

\bibitem[Eberlein et~al.(2016)Eberlein, Eddahbi, and Lalaoui
  Ben~Cherif]{eberlein2016computation}
Ernst Eberlein, M'hamed Eddahbi, and SM~Lalaoui Ben~Cherif.
\newblock {Computation of Greeks in LIBOR models driven by time--inhomogeneous
  L{\'e}vy processes}.
\newblock \emph{Applied Mathematical Finance}, 23\penalty0 (3):\penalty0
  236--260, 2016.

\bibitem[Fang and Oosterlee(2009a)]{fang2009novel}
Fang Fang and Cornelis~W Oosterlee.
\newblock {A novel pricing method for European options based on Fourier-cosine
  series expansions}.
\newblock \emph{SIAM Journal on Scientific Computing}, 31\penalty0
  (2):\penalty0 826--848, 2009a.

\bibitem[F{\"o}llmer and Schied(2011)]{follmer2011stochastic}
Hans F{\"o}llmer and Alexander Schied.
\newblock \emph{{Stochastic finance: an introduction in discrete time}}.
\newblock Walter de Gruyter, 2011.

\bibitem[Glasserman(2004)]{glasserman2004monte}
Paul Glasserman.
\newblock \emph{Monte Carlo methods in financial engineering}, volume~53.
\newblock Springer, 2004.

\bibitem[Grubb(2008)]{grubb2008distributions}
Gerd Grubb.
\newblock \emph{{Distributions and operators}}, volume 252.
\newblock Springer Science \& Business Media, 2008.

\bibitem[Heston(1993)]{heston1993closed}
S.~L. Heston.
\newblock {A closed-form solution for options with stochastic volatility with
  applications to bond and currency options}.
\newblock \emph{The Review of Financial Studies}, 6\penalty0 (2):\penalty0
  327--343, 1993.

\bibitem[Hull(2021)]{HullJohn2021Ofao}
John Hull.
\newblock \emph{{Options, futures, and other derivatives}}.
\newblock Eleventh edition, global edition edition, 2021.
\newblock ISBN 9781292410623.

\bibitem[Junike(2024)]{junike2024number}
Gero Junike.
\newblock {On the number of terms in the COS method for European option
  pricing}.
\newblock \emph{Numerische Mathematik}, 156\penalty0 (2):\penalty0 533--564,
  2024.

\bibitem[Junike and Pankrashkin(2022)]{junike2022precise}
Gero Junike and Konstantin Pankrashkin.
\newblock {Precise option pricing by the COS method--How to choose the
  truncation range}.
\newblock \emph{Applied Mathematics and Computation}, 421:\penalty0 126935,
  2022.

\bibitem[K{\"u}chler and Tappe(2008)]{kuchler2008shapes}
Uwe K{\"u}chler and Stefan Tappe.
\newblock {On the shapes of bilateral Gamma densities}.
\newblock \emph{Statistics \& Probability Letters}, 78\penalty0 (15):\penalty0
  2478--2484, 2008.

\bibitem[Leitao et~al.(2018)Leitao, Oosterlee, Ortiz-Gracia, and
  Bohte]{leitao2018data}
{\'A}lvaro Leitao, Cornelis~W Oosterlee, Luis Ortiz-Gracia, and Sander~M Bohte.
\newblock {On the data-driven COS method}.
\newblock \emph{Applied Mathematics and Computation}, 317:\penalty0 68--84,
  2018.

\bibitem[Lewis(2001)]{lewis2001simple}
Alan~L Lewis.
\newblock {A simple option formula for general jump-diffusion and other
  exponential L{\'e}vy processes}.
\newblock \emph{Available at SSRN 282110}, 2001.

\bibitem[Lord et~al.(2008)Lord, Fang, Bervoets, and Oosterlee]{lord2008fast}
R.~Lord, F.~Fang, F.~Bervoets, and C.~W. Oosterlee.
\newblock {A fast and accurate FFT-based method for pricing early-exercise
  options under L{\'e}vy processes}.
\newblock \emph{SIAM Journal on Scientific Computing}, 30\penalty0
  (4):\penalty0 1678--1705, 2008.

\bibitem[Madan and Schoutens(2016)]{madan2016applied}
Dilip Madan and Wim Schoutens.
\newblock \emph{{Applied conic finance}}.
\newblock Cambridge University Press, 2016.

\bibitem[Madan(2016)]{madan2016adapted}
Dilip~B Madan.
\newblock {Adapted hedging}.
\newblock \emph{Annals of Finance}, 12\penalty0 (3):\penalty0 305--334, 2016.

\bibitem[Madan et~al.(1998)Madan, Carr, and Chang]{madan1998variance}
Dilip~B Madan, Peter~P Carr, and Eric~C Chang.
\newblock {The variance gamma process and option pricing}.
\newblock \emph{Review of Finance}, 2\penalty0 (1):\penalty0 79--105, 1998.

\bibitem[Martins and Ning(2021)]{martins2021engineering}
Joaquim~RRA Martins and Andrew Ning.
\newblock \emph{{Engineering design optimization}}.
\newblock Cambridge University Press, 2021.

\bibitem[Ortiz-Gracia and Oosterlee(2013)]{ortiz2013robust}
L.~Ortiz-Gracia and C.~W. Oosterlee.
\newblock {Robust pricing of European options with wavelets and the
  characteristic function}.
\newblock \emph{SIAM Journal on Scientific Computing}, 35\penalty0
  (5):\penalty0 B1055--B1084, 2013.

\bibitem[Ortiz-Gracia and Oosterlee(2016)]{ortiz2016highly}
L.~Ortiz-Gracia and C.~W. Oosterlee.
\newblock {A highly efficient Shannon wavelet inverse Fourier technique for
  pricing European options}.
\newblock \emph{SIAM Journal on Scientific Computing}, 38\penalty0
  (1):\penalty0 B118--B143, 2016.

\bibitem[Owolabi and Atangana(2019)]{OwolabiKoladeM2019Nmff}
Kolade~M Owolabi and Abdon Atangana.
\newblock \emph{{Numerical methods for fractional differentiation}}.
\newblock Springer eBooks Mathematics and Statistics. 2019.
\newblock ISBN 9789811500985.

\bibitem[Sauer(2013)]{sauer2013}
Timothy Sauer.
\newblock \emph{{Numerical Analysis}}.
\newblock Pearson Deutschland, 2 edition, 2013.
\newblock ISBN 9781292023588.

\bibitem[Schoutens(2003)]{schoutens2003levy}
W.~Schoutens.
\newblock \emph{{L{\'e}vy Processes in Finance: Pricing Financial
  Derivatives}}.
\newblock Wiley Online Library, 2003.

\bibitem[Takahashi and Yamazaki(2009)]{takahashi2009efficient}
Akihiko Takahashi and Akira Yamazaki.
\newblock {Efficient static replication of European options under exponential
  L{\'e}vy models}.
\newblock \emph{Journal of Futures Markets: Futures, Options, and Other
  Derivative Products}, 29\penalty0 (1):\penalty0 1--15, 2009.

\end{thebibliography}

\end{document}